\newtheorem{theorem}{Theorem}
\newtheorem{remark}{Remark}
\newtheorem{lemma}{Lemma}
\newtheorem{definition}{Definition}
\newtheorem{proposition}{Proposition}
\title{\LARGE \bf
Dealing with Unknown Unknowns: Identification and Selection of Minimal Sensing for Fractional Dynamics with Unknown Inputs
}
\author{Gaurav~Gupta$^{\dag}$ $\quad$ S\'ergio~Pequito$^{\ddagger}$  $\quad$  Paul~Bogdan$^{\dag}$
\thanks{
$^{\dag}$Ming Hsieh Department of Electrical Engineering, University of Southern California, Los Angeles, CA, USA {\tt\small \{ggaurav,pbogdan\}@usc.edu}}
\thanks{
$^{\ddagger}$Department of Industrial and Systems Engineering, Rensselaer Polytechnic Institute, Troy, NY, USA
{\tt\small goncas@rpi.edu}}
}
\begin{document}

\maketitle
\thispagestyle{empty}
\pagestyle{empty}

%
%
\begin{abstract}
This paper focuses on analysis and design of time-varying complex networks having  fractional order dynamics. These systems are key in modeling the complex dynamical processes arising in several natural and man made systems. Notably, examples include neurophysiological signals such as electroencephalogram (EEG) that captures the variation in potential fields, and blood oxygenation level dependent (BOLD) signal, which serves as a proxy for neuronal activity. Notwithstanding, the complex networks originated by locally measuring EEG and BOLD are often treated as isolated networks and do not capture the dependency from external stimuli, e.g., originated in subcortical structures such as the thalamus and the brain stem. Therefore, we propose a paradigm-shift towards the analysis of such complex networks under unknown unknowns (i.e., excitations). Consequently, the  main contributions of the present paper are threefold: (\emph{i}) we present an alternating scheme that enables to determine the best estimate of the model parameters and unknown stimuli; (\emph{ii}) we provide necessary and sufficient conditions to ensure that it is possible to retrieve the state and unknown stimuli; and (\emph{iii}) upon these conditions we determine a small subset of variables that need to be measured to ensure that both state and input can be recovered, while establishing sub-optimality guarantees with respect to the smallest possible subset. Finally, we present several pedagogical examples of the main results using real data collected from an EEG wearable device.

%
\end{abstract}
%

%
%
\section{Introduction}
\label{sec:intro}
A time-varying complex network with fractional dynamics has node activities over time influenced by its own history and also by activities of the other nodes. The systems with such attributes enables the modeling of coupled non-stationary processes that exhibit long-term memory~\cite{Moon,Lundstrom,Werner,Turcott,Thurner}. A multitude of examples can be found in several application domains such as biology, and engineering, e.g. bacteria dynamics \cite{Arafa2013,rihan2014fractional,Koorehdavoudi} and swarm robotics \cite{Couceiro2012,COUCEIRO201636}, respectively. Nonetheless, their applicability becomes even more useful in the context of heterogeneous networks that interact geographically and across different time-scales, as it often occurs in the context of cyber-physical systems (CPS).


In the context of the present manuscript, we are motivated by the recent success of fractional order dynamics in modeling spatiotemporal properties of physiological signals, such as electrocardiogram (ECG), electromyogram (EMG), electroencephalogram (EEG) and blood oxygenation level dependent (BOLD) just to mention a few~\cite{magin2012fractional,baleanu2011fractional}. Despite these modeling capabilities, there is one main limitation that continues to elude scientists and engineers alike. Specifically, complex networks such as the brain, whose nodes will dynamically evolve using fractal order dynamics, are often observed locally. Meaning that some of the dynamics assessed by the models are not only due to the local interaction, but might be constrained by unknown sources, i.e., stimuli that are external to the considered network. Consequently, we propose a model that enables us to account for the existence of such unknown stimuli, and determine the model that best captures the local dynamics under such stimuli. Observe that this enhances the analysis of these systems once we have an additional feature (i.e., the stimuli) that can be the main driver of a possible abnormal behavior of the network~\cite{norden2002role}.

In the context of neuroscience and medical applications, the ability to determine unknown inputs is crucial in the retrieval of the model under the so-called input artifacts~\cite{delorme2007enhanced} captured in the EEG readings due to a pulse that causes a higher variation of the potential than the baseline. Alternatively, the existence of a model to cope with the presence of unknown inputs enable the modeling of stimuli that are originated in the subcortical structures of the brain that are often neglected in the current EEG and functional magnetic resonance imaging (fMRI) that leverages the BOLD signals~\cite{shah2004neural}. Thus, it is imperative to develop such framework, as well as tools that enable us to perform the analysis and design of the systems modeled by such setup, which we introduce in the current paper.

To the best of the authors' knowledge, such framework has not been advanced in the context of discrete-time fractional order dynamics. In the domain of continuous-time fractional dynamics, works like \cite{KONG20175503,Doye} exist for the design of observer in the presence of unknown inputs. The closest work for the discrete-time case is \cite{xue1}, which does not consider the case of unknown inputs.  Nonetheless, the usefulness of accounting for unknown inputs in the context of linear time invariant (LTI) systems is an old topic \cite{charandabi2014novel,sundaram2006optimal,yang2007sequential,park1988closed,hsieh2010optimality,sundaram2007delayed}. Specifically, the closest papers to the results proposed in this paper are as follows: observer with unknown inputs~\cite{sundaram2006optimal}, delayed systems with unknown inputs \cite{sundaram2007delayed}, estimation of unknown inputs with sparsity constraints~\cite{sefati2015linear}. Notwithstanding, LTI systems are not good approximations for fractional order dynamical systems due to their limited memory capabilities. Yet, due to the numerical approximation of the highly nonlinear infinite dimension fractional order system used, we are able to obtain finite dimension closed-form description that will enable us to derive results alike those attained in the context of LTI systems.

The  main contributions of the present paper are threefold: (\emph{i}) we present an alternating scheme that enables to determine the best estimate of the model parameters and unknown stimuli; and (\emph{ii}) we provide necessary and sufficient conditions to ensure that it is possible to retrieve the state and unknown stimuli; and (\emph{iii}) upon these conditions we determine a small subset of variables that need to be measured to ensure that both state and input can be recovered, while establishing sub-optimality guarantees with respect to the smallest possible subset.

The remaining of the paper is organized as follows. Section~\ref{sec:probForm} introduces the model considered in this paper and the main problems studied in this manuscript. Next, in Section~\ref{sec:modelEst} and~\ref{sec:sensorSel}, we present the solution to these problems. Finally, in Section~\ref{sec:experi}, we present an illustrative example of the main results using real EEG data from a wearable technology. 

%
%

\section{Problem Formulation}
\label{sec:probForm}

In this section, we first describe the time-varying complex network model having fractional order dynamical growth under unknown excitations. Next, upon this model, we propose two main problems regarding analysis and design to be addressed in the present paper.

\subsection{System Model}
\label{ssec:sysMod}
We consider a linear discrete time fractional-order dynamical model described as follows:
\begin{eqnarray}
\Delta^{\alpha}x[k+1] &=& Ax[k] + Bu[k] \nonumber\\
y[k] &=& Cx[k], 
\label{eqn:fracLlinModel}
\end{eqnarray}
\noindent where $x\in\mathbb{R}^{n}$ is the state, $u \in \mathbb{R}^{p}$ is the unknown input and $y \in \mathbb{R}^{n}$ is the output vector. Also, we can describe the system by its matrices tuple $(\alpha,A, B, C)$ of appropriate dimensions. In what follows, we assume that the input size is always strictly less than the size of state vector, i.e., $p < n$. The difference between a classic linear time-invariant and the above model is the inclusion of fractional-order derivative whose expansion and discretization \cite{andrzej} for any $i$th state $(1\leq i\leq n)$ can be written as
\begin{equation}
\Delta^{\alpha_{i}}x_{i}[k] = \sum\limits_{j=0}^{k}\psi(\alpha_{i},j) x_{i}[k-j],
\label{eqn:fracExpan}
\end{equation}
\noindent where $\alpha_{i}$ is the fractional order corresponding to the $i$th state and $\psi(\alpha_{i},j) = \frac{\Gamma(j-\alpha_{i})}{\Gamma(-\alpha_{i})\Gamma(j+1)}$ with $\Gamma(.)$ denoting the gamma function. 

Having defined the system model, the system identification i.e. estimation of model parameters from the given data is an important step. It becomes nontrivial when we have unknown inputs since one has to be able to differentiate which part of the evolution of the complex network is due to its intrinsic dynamics and what is due to the unknown inputs. Subsequently, one of the first problems we need to address is that of system identification from the data, as described next.

\subsection{Data-driven Model Estimation}

The fractional-order dynamical model takes care of long-range memory which often is the property of many physiological signals. The estimation of the spatiotemporal parameters when there are no inputs to the system was addressed in~\cite{xue1}. But as it happens, ignoring the inputs inherently assume that the system is isolated from the external surrounding.  Hence, for a model to be able to capture the system dynamics, the inclusion of unknown inputs is necessary. Therefore, the first problem that we consider is as follows. 

\textbf{\textit{Problem-1}}: Given the input coupling matrix $B$, and measurements of all states across a time horizon $[t,t+T-1]$ of length $T$, we aim to estimate the model parameters $(\alpha, A)$ and the unknown inputs $\{u[k]\}_{t}^{t+T-2}$. 

Notice that this would extend the work in \cite{xue1} to include the case of unknown inputs. In fact, we will see in Section~\ref{sec:modelEst} that the proposed solution would result in a different set of model parameters.


\subsection{Sensor Selection}

For the system model described by (\ref{eqn:fracLlinModel}), where the system parameters were determined as part of the solution to the \textit{Problem-1}, we consider that the output measurements are collected only by a subset of sensors. In numerous applications (for example physiological systems) it happens that the sensors are dedicated, i.e., each sensor capture an individual state \cite{xue}, so the measurement model can be written as
\begin{equation}
 y[k] = \mathbb{I}^{S}x[k],
\end{equation}

\noindent where $\mathbb{I}^{S}$ is the matrix constructed by selecting rows indexed by set S of the $n\times n$ identity matrix  $I_{n}$. As an example, if all sensors are selected, i.e., $S = [n]\equiv \{1,2,\hdots,n\}$, then $\mathbb{I}^{S} = I_{n}$. For selecting the best set of sensors $S$, with knowledge of the system matrices and the given observations, we would resort to the constraint of \emph{perfect observability} that is defined as follows.

\begin{definition}[Perfect Observability] A system described by~\eqref{eqn:fracLlinModel} is called perfectly observable if given the system matrices $(\alpha,A,B,C)$ and $K$ observations $y[k], 0\leq k\leq K-1$, it is possible to recover the initial state $x[0]$ and the unknown inputs $\{u[k]\}_{k=0}^{K-2}$.
\label{defn:perfObserv}
\end{definition}

Subsequently, the second problem that we consider is as follows. \textbf{\textit{Problem-2}}: Determine the  minimum number of sensors $S$ such that the system whose dynamics is captured by $(\alpha,A,B,\mathbb{I}^{S})$ is perfectly observable from the collection of $K$  measurements collected by a sub-collection of $S$ dedicated outputs, i.e.,
\begin{eqnarray}
&\min\limits_{S \subseteq [n]}\vert S\vert\qquad \nonumber\\
&\text{s.t.}\quad \text{$(\alpha,A,B,\mathbb{I}^{S})$ is perfectly observable.}
\label{opt:minSensProbWords}
\end{eqnarray}

In section~\ref{sec:sensorSel}, we will derive the mathematical formulation in terms of algebraic constraints of the perfect observability, which later be used to obtain a solution to~\eqref{opt:minSensProbWords}.

%
%

\section{Model Estimation}
\label{sec:modelEst}

We consider the problem of estimating $\alpha$, $A$ and inputs $\{u[k]\}_{t}^{t+T-2}$ from the given limited observations $y[k]$, $k = [t, t + T -1]$, which due to the dedicated nature of sensing mechanism is same as $x[k]$ and under the assumption that the input matrix $B$ is known. The realization of $B$ can be application dependent and is computed separately using experimental data. For the simplicity of notation, let us denote $z[k] = \Delta^{\alpha}x[k+1]$ with $k$ chosen appropriately. The pre-factors in the summation in (\ref{eqn:fracExpan}) grows as $\psi(\alpha_{i},j) \sim \mathcal{O}(j^{-\alpha_{i}-1})$ and, therefore, for the purpose of computational ease we have limited the summation in (\ref{eqn:fracExpan}) to the first $J$ values, where $J>0$ is sufficiently large. Therefore, $z_{i}[k]$ can be written as
\begin{equation}
z_{i}[k] = \sum\limits_{j=0}^{J-1}\psi(\alpha_{i},j)x[k+1-j],
\label{eqn:zDefn}
\end{equation}
with the assumption that $x[k], u[k] = 0$ for $k\leq t-1$. Using the above introduced notations and the model definition in (\ref{eqn:fracLlinModel}), the given observations are written as
\begin{equation}
z[k] = Ax[k] + Bu[k] + e[k],
\end{equation}
\noindent where $e \sim \mathcal{N}(0,\Sigma)$ is assumed to be  Gaussian noise independent across space and time. For simplicity, we have assumed that each noise component has same variance, i.e., $\Sigma = \sigma^2 I$. Also, let us denote the system matrices as $A = [a_{1},a_{2},\hdots,a_{n}]^{T}$ and $B = [b_{1},b_{2},\hdots,b_{n}]^{T}$. The vertical concatenated states and inputs during an arbitrary window of time as  $X_{[t-1,t+T-2]} = [x[t-1],x[t],\hdots,x[t+T-2]]^{T}$, $U_{[t-1,t+T-2]} = [u[t-1],u[t],\hdots,u[t+T-2]]^{T}$ respectively, and for any $i$th state we have $Z_{i,[t-1,t+T-2]} = [z_{i}[t-1],z_{i}[t],\hdots,z_{i}[t+T-2]]^{T}$. For the sake of brevity, we would be dropping the time horizon subscript from the above matrices as it is clear from the context.


Since the problem of joint estimation of the different parameters is highly nonlinear, we proceed as follows: (\emph{i}) we estimate the fractional order $\alpha$ using the wavelet technique described in \cite{flandrin}; and (\emph{ii}) with $\alpha$ known, the $z$ in (\ref{eqn:zDefn}) can be computed under the additional assumption that the system matrix $B$ is known. Therefore, the problem now reduces to estimate $A$ and the inputs $\{u[k]\}_{t}^{t+T-2}$. Towards this goal, we propose the following sequential optimization algorithm similar to an expectation-maximization (EM) algorithm~\cite{McLachlam}. Briefly, the EM algorithm is used for maximum likelihood estimation (MLE) of parameters subject to hidden variables. Intuitively, in our case, in Algorithm~1, we estimate $A$ in the presence of hidden variables or \textit{unknown unknowns} $\{u[k]\}_{t}^{t+T-2}$. Therefore, the `E-step' is performed to average out the effects of unknown unknowns and obtain an estimate of $u$, where due to the diversity of solutions, we control the sparsity of the inputs (using the parameter $\lambda'$). Subsequently, the `M-step' can then accomplish MLE estimation to obtain an estimate of $A$. The solution provided in \cite{xue1} can be related to the proposed technique as follows.


\begin{remark}
The solution to the system parameters $(\alpha, A)$ estimation without inputs \cite{xue1} is a special case of the EM like approach proposed in the Algorithm\,\ref{alg:EM_alg}.
\end{remark}

\begin{proof}
Upon setting $\{u[k]\}_{t}^{t+T-2}=0$ in the E-step of the Algorithm\,\ref{alg:EM_alg}, M-step would be the same at each iteration. Hence the algorithm stays at the initial point which is the solution in \cite{xue1}.
\end{proof}

%
%

\begin{algorithm}
{\small
\SetKw{KwInitialize}{Initialize}
\SetArgSty{normal}
\KwIn{ $x[k], k \in [t,t+T-1]$ and $B$}

\KwOut{$A$ and $\{u[k]\}_{t}^{t+T-2}$}

\KwInitialize{compute $\alpha$ using \cite{flandrin} and then $z[k]$}. For $l=0$, initialize $A^{(l)}$ as
\begin{equation*}
a_{i}^{(l)} = \text{arg}\min\limits_{a} \vert\vert Z_{i} - Xa\vert\vert_{2}^{2}
\end{equation*}
\Repeat{until converge}{

(i) \textbf{`E-step'}: For $k \in [t,t+T-2]$ obtain $u[k]$ as
\begin{equation*}
u[k] = \text{arg}\min\limits_{u}\vert\vert z[k] - A^{(l)}x[k] - Bu\vert\vert_{2}^{2} + \lambda'\vert\vert u\vert\vert_{1},
\end{equation*}
where $\lambda' = 2\sigma^{2}\lambda$;

(ii) \textbf{`M-step'}: \\ obtain $A^{(l+1)}= [a_{1}^{(l+1)},a_{2}^{(l+1)},\hdots,a_{n}^{(l+1)}]^{T}$ where
\begin{equation*}
a_{i}^{(l+1)} = \text{arg}\min\limits_{a} \vert\vert \tilde{Z}_{i} - Xa\vert\vert_{2}^{2},
\end{equation*}
\noindent and $\tilde{Z}_{i} = Z_{i} - Ub_{i}$\;
$l \leftarrow l + 1$\;
}
}
\caption{EM algorithm}
\label{alg:EM_alg}
\end{algorithm}
%
It is worthwhile to mention the following result regarding EM algorithm.

\begin{theorem}[\cite{dempster}]
The incomplete data (without unknown unknowns) likelihood $\mathbb{P}(z,x;A^{(l)})$ is non-decreasing after an EM iteration.
\label{thm:liklIncr}
\end{theorem}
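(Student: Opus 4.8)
The plan is to recall the classical monotonicity argument of Dempster, Laird, and Rubin, adapting it to the present setting in which the role of the latent variables is played by the unknown inputs $u$. First I would introduce the complete-data likelihood $\mathbb{P}(z,x,u;A)$ and relate it to the incomplete-data likelihood through the factorization $\mathbb{P}(z,x,u;A) = \mathbb{P}(u\mid z,x;A)\,\mathbb{P}(z,x;A)$, which upon taking logarithms yields
\[
\log \mathbb{P}(z,x;A) = \log \mathbb{P}(z,x,u;A) - \log \mathbb{P}(u\mid z,x;A).
\]
Since the left-hand side does not depend on $u$, I would take the expectation of both sides with respect to the posterior $\mathbb{P}(u\mid z,x;A^{(l)})$ induced by the current iterate; the left-hand side is left unchanged, producing the decomposition $\log \mathbb{P}(z,x;A) = Q(A\mid A^{(l)}) - H(A\mid A^{(l)})$, where $Q(A\mid A^{(l)}) = \mathbb{E}[\log \mathbb{P}(z,x,u;A)]$ and $H(A\mid A^{(l)}) = \mathbb{E}[\log \mathbb{P}(u\mid z,x;A)]$, with both expectations taken over that same posterior.

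The second step is to control the two terms separately across one iteration. For the $Q$ term, the `M-step' of Algorithm~\ref{alg:EM_alg} is by construction the maximizer of $Q(\cdot\mid A^{(l)})$, so $Q(A^{(l+1)}\mid A^{(l)}) \geq Q(A^{(l)}\mid A^{(l)})$. For the $H$ term, I would invoke Gibbs' inequality: the difference $H(A\mid A^{(l)}) - H(A^{(l)}\mid A^{(l)})$ equals the negative Kullback--Leibler divergence $-D_{\mathrm{KL}}\big(\mathbb{P}(\cdot\mid z,x;A^{(l)}) \,\|\, \mathbb{P}(\cdot\mid z,x;A)\big)$, which is non-positive, so $H(A^{(l+1)}\mid A^{(l)}) \leq H(A^{(l)}\mid A^{(l)})$. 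Writing the per-iteration gain as
\begin{align*}
&\log \mathbb{P}(z,x;A^{(l+1)}) - \log \mathbb{P}(z,x;A^{(l)}) \\
&= \big[Q(A^{(l+1)}\mid A^{(l)}) - Q(A^{(l)}\mid A^{(l)})\big] \\
&\quad - \big[H(A^{(l+1)}\mid A^{(l)}) - H(A^{(l)}\mid A^{(l)})\big],
\end{align*}
both bracketed expressions then contribute non-negatively, which establishes the claim.

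The main obstacle I anticipate is not the inequality chain itself, which is standard, but rather reconciling the abstract $Q$/$H$ decomposition with the concrete updates in Algorithm~\ref{alg:EM_alg}. In particular, the `E-step' here does not compute a full posterior expectation but instead returns a point estimate of $u$ obtained by minimizing a least-squares term with an $\ell_1$ penalty, i.e., a MAP estimate under a Gaussian likelihood and a Laplace prior with parameter $\lambda'$. I would therefore need to make explicit that the complete-data model in force is $\mathbb{P}(z,x,u;A) \propto \exp\big(-\|z - Ax - Bu\|_2^2/2\sigma^2\big)\,\exp\big(-\lambda\|u\|_1\big)$, and then argue that the surrogate maximized in the `M-step' is consistent with this model, so that monotonicity of the corresponding (penalized) incomplete-data likelihood is what is actually preserved. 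Verifying that the point-estimate `E-step' still furnishes a valid minorizer of the log-likelihood, rather than the exact posterior average assumed in \cite{dempster}, is the delicate part and is where the argument must be handled with care.
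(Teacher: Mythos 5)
The paper does not prove this statement at all --- it is quoted verbatim as the classical monotonicity result of Dempster, Laird, and Rubin, with the citation standing in for the proof. Your argument is exactly the canonical $Q$/$H$ decomposition plus Gibbs' inequality from that reference, so it is correct and coincides with the intended justification; nothing more is needed for the theorem as stated.

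The subtlety you flag at the end is real, but note that it concerns whether Algorithm~\ref{alg:EM_alg} actually satisfies the hypotheses of the theorem, not the theorem itself. The paper's Appendix~\ref{appd:EMFormulation} explicitly approximates the posterior by a point mass, $\mathbb{P}(u[k]\mid z[k],x[k];A^{(l)}) \approx \mathds{1}_{\{u[k]=\overset{\ast}{u}[k]\}}$, which makes the scheme a ``hard'' (classification) EM rather than exact EM. Under that approximation the Gibbs-inequality step of your argument no longer applies to the true posterior, and monotonicity of the marginal likelihood $\mathbb{P}(z,x;A^{(l)})$ is not guaranteed in general --- only a joint objective in $(A,u)$ is guaranteed to be non-decreasing. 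So your proof is the right proof of the cited theorem, and your closing caveat correctly identifies the gap between that theorem and the algorithm the paper applies it to.
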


Hence, the proposed algorithm being EM (detailed formulation in the Appendix\,\ref{appd:EMFormulation}) has non-decreasing likelihood. Additionally, we have the following result about the incomplete data likelihood.

\begin{proposition}
The incomplete data likelihood $\mathbb{P}(z,x;A^{(l)})$ is bounded at each iteration $l$.
\label{prop:liklBounded}
\end{proposition}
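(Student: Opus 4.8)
The plan is to produce a finite upper bound on the incomplete data likelihood that is uniform over the parameter matrix $A$, so that it holds in particular at every iterate $A^{(l)}$; together with the monotonicity of Theorem~\ref{thm:liklIncr}, this boundedness is exactly what certifies that the likelihood sequence converges. The key fact I would exploit is that, under the Gaussian residual model $z[k]=Ax[k]+Bu[k]+e[k]$ with $e\sim\mathcal{N}(0,\sigma^2 I)$, the incomplete data likelihood is a marginal of a product of Gaussian densities, and any multivariate Gaussian density is bounded above by its peak value $(2\pi\sigma^2)^{-n/2}$, which depends only on the noise covariance and not on the mean.

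First I would take the explicit expression for the incomplete data likelihood derived in Appendix~\ref{appd:EMFormulation} and write it as the marginal over the hidden inputs,
\[
\mathbb{P}(z,x;A)=\int \mathbb{P}(z\mid x,u;A)\,\mathbb{P}(u)\,du,
\]
where the independence of $e$ across space and time makes $\mathbb{P}(z\mid x,u;A)=\prod_{k}\mathcal{N}\big(z[k];\,Ax[k]+Bu[k],\,\sigma^2 I\big)$ a product of per-time-step Gaussian densities, and $\mathbb{P}(u)$ is the proper (Laplace) prior associated with the $\ell_1$ term in the E-step. Next I would bound each Gaussian factor pointwise by its peak, namely $\mathcal{N}(z[k];\mu,\sigma^2 I)\le(2\pi\sigma^2)^{-n/2}$ for every mean $\mu$, so that the conditional density obeys $\mathbb{P}(z\mid x,u;A)\le(2\pi\sigma^2)^{-nT/2}$ uniformly in both $u$ and $A$. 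Pulling this constant outside the integral and using $\int\mathbb{P}(u)\,du=1$ then gives
\[
\mathbb{P}(z,x;A)\le(2\pi\sigma^2)^{-nT/2},
\]
a finite bound with no dependence on $A$; evaluating at $A=A^{(l)}$ for each $l$ yields the proposition.

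The step I expect to require the most care is verifying that the bound is genuinely uniform in the iteration index, i.e., that the peak value $(2\pi\sigma^2)^{-nT/2}$ carries no hidden dependence on $A^{(l)}$. This rests on the elementary but essential observation that the maximum of a Gaussian density is governed solely by its covariance $\sigma^2 I$, which is held fixed throughout the algorithm, whereas the parameter $A$ enters only through the mean, which the peak bound discards. A secondary technicality is justifying the passage of the uniform bound inside the integral and the finiteness of the resulting marginal; both follow from nonnegativity of the integrand together with the fact that $\mathbb{P}(u)$ is a proper probability density, so the dominating integral is finite.
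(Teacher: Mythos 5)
Your proposal is correct and follows essentially the same route as the paper's proof in Appendix~\ref{appnd:liklBounded}: both write $\mathbb{P}(z,x;A^{(l)})$ as a marginal over the hidden inputs, bound the Gaussian factor uniformly in $u$ and $A^{(l)}$ (the paper bounds the unnormalized exponential by $1$, you bound the normalized density by its peak $(2\pi\sigma^2)^{-n/2}$ --- an equivalent step), and conclude via the finiteness of the Laplace prior integral. The only cosmetic difference is your explicit tracking of normalization constants, which makes the uniformity in $l$ slightly more transparent but does not change the argument.
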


We can comment about the convergence of the Algorithm\,\ref{alg:EM_alg} as follows.
\begin{lemma}
The Algorithm\,\ref{alg:EM_alg} is convergent in the likelihood sense.
\end{lemma}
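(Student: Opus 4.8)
The plan is to reduce the statement to the classical monotone convergence theorem for real sequences, using the two results established immediately above. Writing $\ell_l \equiv \mathbb{P}(z,x;A^{(l)})$ for the incomplete-data likelihood evaluated at the iterate produced after $l$ rounds of Algorithm~\ref{alg:EM_alg}, I would first observe that $\{\ell_l\}_{l\ge 0}$ is a well-defined sequence of real numbers, one value per outer iteration of the \texttt{Repeat} loop.

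The key facts I would assemble are: (\emph{i}) by Theorem~\ref{thm:liklIncr}, each EM iteration does not decrease the incomplete-data likelihood, so $\ell_{l+1}\ge \ell_l$ for every $l$, i.e. the sequence is monotonically non-decreasing; and (\emph{ii}) by Proposition~\ref{prop:liklBounded}, the same likelihood is bounded at every iteration, which in particular furnishes a finite upper bound $M$ with $\ell_l \le M$ for all $l$. A monotone non-decreasing sequence that is bounded above converges to its supremum; hence $\ell_l \to \ell^\star := \sup_l \ell_l < \infty$, which is exactly the assertion that Algorithm~\ref{alg:EM_alg} converges in the likelihood sense.

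Since the heavy lifting is done by the preceding Theorem and Proposition, the only care needed is in interpreting the claim precisely. I would emphasize that \emph{convergence in the likelihood sense} refers to convergence of the scalar likelihood values $\ell_l$, and not to convergence of the parameter iterates $A^{(l)}$ themselves; the latter would require additional regularity (for example, that the set of stationary points of the likelihood be discrete) and is not asserted here. I therefore do not expect a genuine obstacle in this proof: the substantive content lives entirely in Proposition~\ref{prop:liklBounded}, whose boundedness guarantee is precisely what upgrades the monotonicity supplied by Theorem~\ref{thm:liklIncr} into convergence.
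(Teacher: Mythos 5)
Your proof is correct and follows exactly the paper's own argument: combine the monotonicity from Theorem~\ref{thm:liklIncr} with the boundedness from Proposition~\ref{prop:liklBounded} and invoke the monotone convergence theorem for real sequences. Your closing caveat distinguishing likelihood convergence from parameter convergence matches the remark the paper makes immediately after the lemma.
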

\begin{proof}
Using Theorem\,\ref{thm:liklIncr}, Proposition\,\ref{prop:liklBounded} and Monotone Convergence Theorem, we can claim that the likelihood $\mathbb{P}(z,x;A^{(l)})$ will converge.
\end{proof}

It should be noted that convergence in likelihood does not always guarantee convergence of the parameters. But as emphasized in \cite{wu1983}, from numerical viewpoint the convergence of parameters is not as important as convergence of the likelihood. Also the EM algorithm can converge to saddle points as exemplified in \cite{McLachlam}.
%
%

\section{Sensor Selection}
\label{sec:sensorSel}

Before defining the problem of sensor selection, we review the properties of fractional-order growth system with closed-form expressions for state vectors. Using the expansion of fractional order derivative in (\ref{eqn:fracExpan}), we can write the state evolving equation as
\begin{equation}
 x[k+1] = Ax[k] - \sum\limits_{j=1}^{k+1}D(\alpha, j)x[k+1-j] + Bu[k],
 \label{eqn:nextStateRaw}
\end{equation}
\noindent where $D(\alpha,j) = \text{diag}(\psi(\alpha_{1},j),\psi(\alpha_{2},j),\hdots,\psi(\alpha_{n},j))$. Alternatively, (\ref{eqn:nextStateRaw}) can be written as

\begin{equation}
 x[k+1] = \sum\limits_{j=0}^{k}A_{j}x[k-j] + Bu[k],
 \label{eqn:nextStateFin}
\end{equation}
\noindent where $A_{0} = A-D(\alpha,1)$ and $A_{j} = -D(\alpha,j+1)$ for $j\geq 1$. With this definition of $A_{j}$, we can define the matrices $G_{k}$  as follows~\cite{guermah}:

\begin{equation}
 G_{k} = \begin{cases}
          I_{n} & k = 0,\\
          \sum\limits_{j=0}^{k-1}A_{j}G_{k-1-j} & k\geq 1.
         \end{cases}
\label{eqn:GMatrix}
\end{equation}
Thus, we can obtain the following result. 
\begin{lemma}[\cite{guermah}]
The solution to system described by (\ref{eqn:fracLlinModel}) is given by
\begin{equation}
 x[k] = G_{k}x[0] + \sum\limits_{j=0}^{k-1}G_{k-1-j}Bu[j].
 \label{eqn:nextStateThm}
\end{equation}
\end{lemma}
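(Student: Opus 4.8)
The plan is to establish the closed-form solution~\eqref{eqn:nextStateThm} by strong induction on $k$, using the state recursion~\eqref{eqn:nextStateFin} together with the defining recursion~\eqref{eqn:GMatrix} for the matrices $G_k$. The base case $k=0$ is immediate: the claimed formula reduces to $x[0] = G_0 x[0] = I_n x[0]$ with an empty input sum. As a sanity check one may also verify $k=1$, where $G_1 = A_0 G_0 = A_0$ gives $x[1] = A_0 x[0] + G_0 B u[0]$, matching~\eqref{eqn:nextStateFin} directly.

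For the inductive step I would assume~\eqref{eqn:nextStateThm} holds for every index $m \le k$ and derive it for $k+1$. Strong induction is needed because the recursion~\eqref{eqn:nextStateFin} for $x[k+1]$ references the entire history $x[k], x[k-1], \ldots, x[0]$. Starting from $x[k+1] = \sum_{j=0}^{k} A_j x[k-j] + B u[k]$, I substitute the inductive hypothesis for each $x[k-j]$. This splits into two contributions: a term multiplying $x[0]$ with coefficient $\sum_{j=0}^{k} A_j G_{k-j}$, which is exactly $G_{k+1}$ by~\eqref{eqn:GMatrix}; and a double sum over the inputs, plus the isolated $B u[k]$ term.

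The substantive part is the input contribution. Here I would interchange the order of summation and collect, for each fixed input index $m$, the total coefficient of $B u[m]$. After re-indexing, the coefficient for $m < k$ is a partial convolution of the form $\sum_{j=0}^{k-1-m} A_j G_{k-1-m-j}$, which the definition~\eqref{eqn:GMatrix} identifies as precisely $G_{k-m}$; the stray $B u[k]$ term supplies the coefficient $G_0 = I_n$ for $m = k$. Reassembling gives $x[k+1] = G_{k+1} x[0] + \sum_{m=0}^{k} G_{k-m} B u[m]$, which is~\eqref{eqn:nextStateThm} at index $k+1$, closing the induction.

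The only obstacle is the combinatorial bookkeeping in the last step: one must track the summation limits carefully when substituting the hypothesis and recognize that the resulting partial sums align with the convolutional recursion defining $G_k$. There is no analytic difficulty; conceptually the argument just confirms that $G_k$ is the impulse response (discrete Green's function) of~\eqref{eqn:nextStateFin}, so that the state decomposes into a free response $G_k x[0]$ plus the convolution of $G_k$ with the forcing $B u[\cdot]$.
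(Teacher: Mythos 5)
Your induction is correct. Note, however, that the paper does not prove this lemma at all: it is stated as a known result imported from the cited reference \cite{guermah}, so there is no in-paper argument to compare against. Your strong induction is the standard self-contained derivation: the base case $x[0]=G_0x[0]$ is trivial, the coefficient of $x[0]$ after substituting the hypothesis into \eqref{eqn:nextStateFin} is $\sum_{j=0}^{k}A_jG_{k-j}$, which is exactly the defining recursion \eqref{eqn:GMatrix} for $G_{k+1}$, and the swapped double sum gives each $Bu[m]$ the coefficient $\sum_{j=0}^{k-1-m}A_jG_{k-1-m-j}=G_{k-m}$, with the isolated $Bu[k]$ supplying $G_0$. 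All index bookkeeping checks out, and your closing remark that $G_k$ is the discrete impulse response of the long-memory recursion is the right way to understand why the solution takes the free-response-plus-convolution form. The only thing you give up relative to simply citing \cite{guermah} is brevity; what you gain is a verification that the formula is consistent with the specific $A_j$ and $G_k$ conventions used in this paper, which is worth having since off-by-one errors in the convolution limits are the most common failure mode for results of this type.
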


\subsection{System Observability}

To achieve perfect observability, i.e., to retrieve the initial state and the unknown inputs, we need system matrices and $K$ observations. While any observation matrix $C$ is sufficient for defining the perfect observability, if we set $C = \mathbb{I}_{S}$ as introduced in Section~\ref{sec:probForm}, then $K$ and $S$ are intertwined. Simply speaking, by increasing $K$ we will have more measurements  acquired across time which could be used to compensate the number of measurements at each instance of time ruled by the set $S$ of sensors used.


Given the $K$ observations from $k = 0$ to $K-1$, we can represent the initial state $x[0]$ and the unknown inputs using~\eqref{eqn:nextStateThm} by defining the following matrices

\begin{equation}
 \Theta = [(CG_{0})^{T}, (CG_{1})^{T},\hdots,(CG_{K-1})^{T}]^{T},
 \label{eqn:ThetaConstr}
\end{equation}

and

\begin{equation}
 \Xi = \begin{bmatrix}
       0 & 0 & \hdots & 0 & 0 \\
       CG_{0}B & 0 & \hdots & 0 & 0 \\
       CG_{1}B & CG_{0}B & \hdots & 0 & 0 \\
       \vdots & \vdots & \ddots & \vdots & \vdots \\
       CG_{K-2}B & CG_{K-3}B & & CG_{0}B & 0 
      \end{bmatrix},
       \label{eqn:XiConstr}
\end{equation}

\noindent where $C$ and $B$ are the observation and input matrices from (\ref{eqn:fracLlinModel}) respectively, and $G_{k}$ is as defined in~\eqref{eqn:GMatrix}. Having $\Theta$ and $\Xi$ defined and using (\ref{eqn:fracLlinModel}), we can write the initial state and inputs in terms of the observations as
\begin{equation}
 Y = \Theta\,x[0] + \Xi\,U,
 \label{eqn:observEqn}
\end{equation}

\noindent where $Y = [y[0]^{T},y[1]^{T},\hdots,y[K-1]^{T}]^{T}$ and $U = [u[0]^{T},u[1]^{T},\hdots,u[K-1]^{T}]^{T}$.

Using (\ref{eqn:observEqn}) and the Definition\,\ref{defn:perfObserv}, a necessary and sufficient condition to attain the perfect observability is obtained as follows.

\begin{proposition}\label{prop:condPerfObs}
The system described by (\ref{eqn:fracLlinModel}) is perfectly observable after $K$ measurements if and only if $\text{rank}([\Theta\:\:\Xi]) = n + (K-1)p$.
\end{proposition}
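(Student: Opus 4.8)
The plan is to reduce perfect observability to a standard full-column-rank condition on the linear map defined by $[\Theta\ \Xi]$. First I would start from the observation equation~\eqref{eqn:observEqn}, $Y = \Theta\,x[0] + \Xi\,U$, which by the preceding Lemma collects all $K$ outputs $y[0],\dots,y[K-1]$ into a single exact (noise-free) linear relation between the data $Y$ and the unknown vector $[x[0]^T\ U^T]^T$. By Definition~\ref{defn:perfObserv}, perfect observability is precisely the statement that $Y$ uniquely determines the initial condition $x[0]$ together with the inputs $\{u[k]\}_{k=0}^{K-2}$, so the entire argument amounts to translating ``uniquely determines'' into a rank statement.

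The key structural observation is that the rightmost block column of $\Xi$ in~\eqref{eqn:XiConstr} is identically zero. This reflects the causal fact that $u[K-1]$ enters only $x[K]$, which is never measured within the horizon $k=0,\dots,K-1$; hence $u[K-1]$ is irrelevant to $Y$ and, consistently with Definition~\ref{defn:perfObserv}, is not among the quantities we seek to recover. I would therefore discard this zero block column, writing the system equivalently as $Y = M\,w$, where $w = [x[0]^T, u[0]^T,\dots,u[K-2]^T]^T \in \mathbb{R}^{n+(K-1)p}$ and $M$ denotes $[\Theta\ \Xi]$ with the final zero block column removed. Since deleting a zero column leaves the rank unchanged, $\text{rank}(M) = \text{rank}([\Theta\ \Xi])$.

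With this reformulation the proof reduces to a standard equivalence: the vector $w$ is uniquely recoverable from $Y = M\,w$ if and only if the map $w \mapsto M\,w$ is injective, i.e. $M$ has trivial null space, i.e. $M$ has full column rank. Because $M$ has exactly $n+(K-1)p$ columns, full column rank is equivalent to $\text{rank}(M) = n+(K-1)p$, and by the previous paragraph this coincides with $\text{rank}([\Theta\ \Xi]) = n+(K-1)p$. For the ``only if'' direction I would argue the contrapositive: a nonzero null vector of $M$ produces two distinct $(x[0],\{u[k]\})$ pairs generating the identical output sequence $Y$, contradicting perfect observability; for ``if'' I would note that full column rank makes $M$ left-invertible, yielding the explicit reconstruction $w = (M^T M)^{-1} M^T Y$.

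The step requiring the most care is the bookkeeping around the zero block column, since the ambient matrix $[\Theta\ \Xi]$ carries $n+Kp$ columns while the target rank is $n+(K-1)p$; one must be explicit that the discrepancy is exactly the unrecoverable input $u[K-1]$, so that the rank threshold is stated against the correct count of genuinely identifiable unknowns. Once this is pinned down, the equivalence between unique solvability and full column rank is routine linear algebra.
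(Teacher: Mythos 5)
Your proof is correct and follows essentially the same route as the paper's: rewrite the stacked observation equation as $Y = [\Theta\ \Xi]\begin{bmatrix}x[0]\\ U\end{bmatrix}$ and identify unique recovery of $x[0]$ and $u[0],\dots,u[K-2]$ with the rank condition. The paper states this in one line; your explicit handling of the zero block column corresponding to $u[K-1]$ (which reconciles the $n+Kp$ columns of $[\Theta\ \Xi]$ with the target rank $n+(K-1)p$) is a detail the paper leaves implicit, and you fill it in correctly.
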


\begin{proof}
The proof follows from re-writing equation (\ref{eqn:observEqn}) as

\begin{equation}
Y = [\Theta\:\:\Xi]\begin{bmatrix}
x[0] \\
U
\end{bmatrix},
\end{equation}
\noindent and, therefore, $x[0]$ and first $K-1$ inputs from $U$ can be recovered if and only if $\text{rank}([\Theta\:\:\Xi]) = n + (K-1)p$.
\end{proof}

Proposition~\ref{prop:condPerfObs} will be key to formulate the constraints in the sensor selection problem as detailed in the next section. 

\subsection{Sensor Selection Problem}

Given the system matrices $(\alpha, A, B)$ and first $K$ observations, the problem of sensor selection is defined as selecting the minimum number of sensors such that the system is perfectly observable. It can be mathematically written as
\begin{equation}
\min\limits_{S \subseteq [n]}\vert S\vert\qquad \text{s.t.}\quad \text{rank}([\Theta\:\:\Xi] \big\vert C = \mathbb{I}^{S}) = n+(K-1)p
\label{opt:minSensProb}
\end{equation}

\noindent where $\text{rank}([\Theta\,\,\Xi] \big\vert C = \mathbb{I}^{S})$  denotes the rank of $[\Theta\,\,\Xi]$ matrix when $\Theta$ and $\Xi$ are constructed from (\ref{eqn:ThetaConstr}) and (\ref{eqn:XiConstr}) with $C = \mathbb{I}^{S}$. An analogous problem of sensor selection with no inputs is studied in \cite{xue} and it was shown to be NP-hard; hence,~\eqref{opt:minSensProb} is at least as computationally challenging since it contains the former as a special case when $B=0$. 

Consequently, we propose a  sub-optimal solution to the above problem, while providing optimality guarantees within constant factor. For the discrete set $\Omega = [n]$, a function $f:2^{\Omega}\rightarrow \mathbb{R}$ is called submodular if for all sets $S, T \subseteq \Omega$
\begin{equation}
f(S \cup T) + f(S \cap T) \leq f(S) + f(T).
\end{equation}

Also, the marginal of an element $a$ w.r.t. set $S$ is defined as $f_{S}(a) = f(S \cup a) - f(S)$. Alternatively, a set function is referred as submodular if and only if it satisfies the diminishing returns property, i.e., $f_{S}(a) \geq f_{T}(a)$ for all $S\subseteq T\subseteq \Omega$ and $a \in \Omega\setminus T$~\cite{bach}.  The monotone submodular functions have a remarkable property of performance through greedy selection within constant factor of the optimality \cite{conforti}.

With the motivation to borrow such attributes, let us define a discrete set function $f(S)$ as, $f(S) = \text{rank}([\Theta \:\: \Xi] \big\vert C = \mathbb{I}^{S})$.

\begin{theorem}
The discrete set function $f(S) = \text{rank}([\Theta \:\: \Xi] \big\vert C = \mathbb{I}^{S})$ is submodular in $S$.
\label{thm:subMod}
\end{theorem}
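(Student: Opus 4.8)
The plan is to recognize $f(S)$ as the dimension of a row space that grows monotonically with $S$, and then to verify the diminishing-returns characterization of submodularity by a single dimension-counting step.

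First I would unpack how $S$ enters the matrix. Since $C = \mathbb{I}^S$ extracts exactly the rows indexed by $S$ from $I_n$, each block $CG_k$ appearing in $\Theta$ and each block $CG_j B$ appearing in $\Xi$ is obtained by keeping the rows of the \emph{fixed} matrices $G_k$ and $G_j B$ that are indexed by $S$. The key observation is that $G_k$ and $G_j B$ depend only on $(\alpha,A,B)$ and not on the chosen sensor set. Consequently, each sensor $i \in [n]$ contributes a fixed collection $V_i$ of $K$ row vectors (the $i$-th rows of $G_0,\ldots,G_{K-1}$ together with the corresponding rows of the $G_j B$ blocks), independently of which other sensors are selected. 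Hence, up to a reordering of rows, $[\Theta\:\:\Xi]\big\vert_{C=\mathbb{I}^S}$ is the vertical stacking of the groups $\{V_i : i \in S\}$, and
\[
 f(S) = \dim R_S, \qquad R_S := \text{span}\Big(\bigcup_{i\in S} V_i\Big).
\]

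Next I would establish the diminishing-returns inequality, which by the characterization recalled just before the statement is equivalent to submodularity. Fix $S\subseteq T\subseteq[n]$ and $a\in[n]\setminus T$, and write $W_a := \text{span}(V_a)$. By the dimension formula $\dim(R_S + W_a) = \dim R_S + \dim W_a - \dim(R_S \cap W_a)$, the marginal gains are
\[
 f_S(a) = \dim W_a - \dim(W_a \cap R_S), \qquad f_T(a) = \dim W_a - \dim(W_a \cap R_T).
\]
Since $S\subseteq T$ forces $R_S \subseteq R_T$, we have $W_a \cap R_S \subseteq W_a \cap R_T$ and therefore $\dim(W_a\cap R_S) \le \dim(W_a\cap R_T)$. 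Substituting gives $f_S(a) \ge f_T(a)$ for all $S\subseteq T$ and $a\in\Omega\setminus T$, which is exactly the diminishing-returns property, so $f$ is submodular.

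The main obstacle — and really the only non-routine step — is the first one: arguing that each sensor contributes an $S$-independent block of rows, so that enlarging the sensor set can only enlarge the row space $R_S$. This relies solely on the fact that $\Theta$ and $\Xi$ are built from the fixed matrices $G_k$ and $G_j B$, with $S$ entering only through the row-selection operator $\mathbb{I}^S$. Once this row-selection picture is in place, no structural property of the fractional-dynamics matrices is needed beyond their being independent of $S$, and submodularity follows purely from monotonicity of subspace sums under $R_S\subseteq R_T$ together with the dimension formula.
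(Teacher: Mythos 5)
Your proof is correct and follows essentially the same route as the paper's: both reduce the claim to the fact that the rank of a row-submatrix of a fixed matrix is a submodular function of the selected row indices, established by a linear-algebraic dimension count that yields the diminishing-returns inequality $f_S(a)\ge f_T(a)$. Your write-up is a bit more explicit in two places the paper leaves implicit --- that each sensor contributes a fixed group of $K$ rows of $[\Theta\:\:\Xi]$ independent of which other sensors are chosen, and the use of the identity $\dim(R_S+W_a)=\dim R_S+\dim W_a-\dim(R_S\cap W_a)$ together with $W_a\cap R_S\subseteq W_a\cap R_T$ in place of the paper's orthogonal-complement decomposition --- but the underlying argument is the same.
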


%


Since $f$ is submodular, we will be using a greedy selection of sensors to maximize the rank of $[\Theta\:\:\Xi]$; in other words, greedily select sensors such that $f(S) = n + (K-1)p$. The sensor selection algorithm is described as Algorithm\,\ref{alg:greedy_alg}.

\begin{lemma}
The complexity of Algorithm\,\ref{alg:greedy_alg} with a total of $n$ sensors and $K$ length time horizon is $\mathcal{O}(n^{5}K^{3})$ i.e polynomial.
\label{lemm:polyCompl}
\end{lemma}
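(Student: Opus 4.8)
The plan is to carry out a direct operation count of the greedy algorithm by decomposing it into its nested loops and bounding the cost of the single most expensive primitive operation, which is a rank computation. First I would fix the relevant dimensions: by Proposition~\ref{prop:condPerfObs} the matrix $[\Theta\;\Xi]$ has $nK$ rows (since each of the $K$ blocks $CG_k$ contributes at most $n$ rows when $C=\mathbb{I}^{S}$ with $S=[n]$) and $n+(K-1)p$ columns. Since $p<n$, both dimensions are $\mathcal{O}(nK)$, so the matrix is $\mathcal{O}(nK)\times\mathcal{O}(nK)$. A single rank evaluation via Gaussian elimination on a matrix of size $m\times m$ costs $\mathcal{O}(m^{3})$; here that is $\mathcal{O}((nK)^{3}) = \mathcal{O}(n^{3}K^{3})$.

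Next I would account for how many times this rank primitive is invoked. In a standard greedy scheme, the outer loop adds one sensor at a time and runs at most $n$ times; within each outer iteration the algorithm scans every not-yet-selected candidate sensor to find the one giving the largest marginal gain $f_{S}(a)$, which requires at most $n$ evaluations of $f$, each of which is one rank computation. This gives a factor of $n\cdot n = n^{2}$ rank computations overall. Multiplying the number of calls by the per-call cost yields $n^{2}\cdot\mathcal{O}(n^{3}K^{3}) = \mathcal{O}(n^{5}K^{3})$, matching the claimed bound. I would also remark that forming the matrices $\Theta$ and $\Xi$ themselves, including precomputing the $G_{k}$ from the recursion~\eqref{eqn:GMatrix}, is dominated by this cost and so does not affect the asymptotics.

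The main obstacle, and the step deserving the most care, is making the two combinatorial factors of $n$ rigorous rather than merely plausible. The outer bound of $n$ is immediate since $|S|$ can grow to at most $n$; the subtlety is the inner scan and the guarantee that greedy terminates. One must argue that at each step the candidate set has size at most $n$, so that the marginal-gain search costs at most $n$ rank evaluations, and that the total number of greedy steps until $f(S)=n+(K-1)p$ is itself $\mathcal{O}(n)$. This last point is where I would be most careful: the target rank is $n+(K-1)p$, which is $\mathcal{O}(nK)$, not $\mathcal{O}(n)$, so one cannot naively say ``rank increases by at least one per step, hence $\mathcal{O}(nK)$ steps.'' Instead I would rely on the fact that adding a single dedicated sensor appends a whole column-block of up to $K$ rows to $C$, so each greedy step can raise the rank by as much as $\mathcal{O}(K)$, keeping the number of steps at $\mathcal{O}(n)$ rather than $\mathcal{O}(nK)$; pinning this relationship down is what makes the final exponent on $n$ equal to five and not six.

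Finally, I would state that since every constituent cost is a fixed power of $n$ and $K$, the total running time $\mathcal{O}(n^{5}K^{3})$ is polynomial in the problem size, which establishes the claim. I would close by noting that this is a worst-case bound and that in practice incremental rank updates can reduce the per-call cost substantially, but that such refinements do not alter the polynomial conclusion.
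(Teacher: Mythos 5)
Your proposal is correct and follows essentially the same route as the paper: bound a single marginal-gain evaluation by one rank computation on the $\mathcal{O}(nK)\times\mathcal{O}(nK)$ matrix $[\Theta\;\Xi]$ at cost $\mathcal{O}(n^{3}K^{3})$, observe that forward greedy selection performs at most $\mathcal{O}(n^{2})$ such evaluations (the paper counts $n(n+1)/2$, you count $n\cdot n$), and multiply. Your extra care about why the number of greedy rounds is $\mathcal{O}(n)$ rather than $\mathcal{O}(nK)$ is resolved exactly as you note — each round adds a distinct sensor from $[n]$, so there are at most $n$ rounds regardless of how the rank grows — and matches the paper's reasoning.
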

\begin{proof}
With $\Omega = [n]$ the computation of $f_{S_{G}}(s)$ would require at most $\mathcal{O}(n^{3}K^{3})$. The algorithm being forward greedy selection has at most $n(n+1)/2$ executions and hence the complexity of the algorithm is at most $\mathcal{O}(n^{5}K^{3})$.
\end{proof}

Therefore with Theorem\,\ref{thm:subMod} and Lemma\,\ref{lemm:polyCompl}, the Algorithm\,\ref{alg:greedy_alg} provide a sub-optimal solution with optimality guarantees within constant factor to the NP-hard problem (\ref{opt:minSensProb}) in the polynomial order complexity.

\begin{figure}
\centering
\scalebox{0.75}{
\begin{tikzpicture}[scale = 1.4]
\node[anchor=north west,inner sep=0] at (0,0) {\includegraphics*[viewport=0 0 520 500, width = 3.35in, height = 3in]{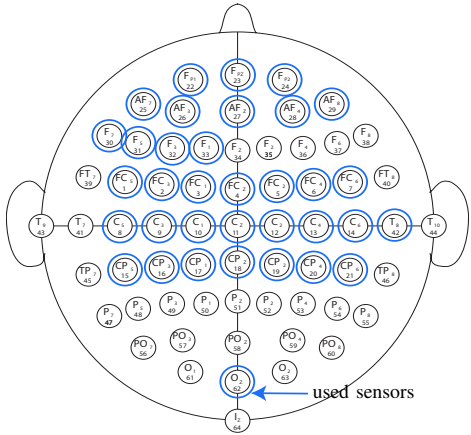}};
\node[anchor=north west] at (3.8,-4.75) {used sensors};
\end{tikzpicture}
}
\caption{Sensor distribution for the measurement of EEG. The channel labels are shown with their corresponding number.}
\label{fig:usedSensors}
\end{figure}


%
%

\begin{algorithm}
{\small
\SetKw{KwInitialize}{Initialize}
\SetArgSty{normal}
%
\KwOut{$S_{G}$}

\KwInitialize{$S_{G} = \phi$\;}

\Repeat{$f(S_{G}) = n+(K-1)p$}{

$s^{\ast} = \text{arg}\max\limits_{s \in \Omega\setminus S_{G}}f_{S_{G}}(s)$\;

$S_{G} \leftarrow S_{G} \cup s^{\ast}$\;

}
}
\caption{Greedy Sensor Selection}
\label{alg:greedy_alg}
\end{algorithm}
%

%
%
\section{Experiments}
\label{sec:experi}
We demonstrate the application of the results derived in the previous sections on physiological signals. In particular we have taken a $64$-channel EEG signal which records the brain activity of 109 subjects. The $64$-channel/electrode distribution with the corresponding labels and numbers are shown in Figure\,\ref{fig:usedSensors}. The subjects were asked to perform motor and imagery tasks. The data was collected by BCI$2000$ system with sampling rate of 160Hz \cite{schalk,goldberger}.

\begin{figure}
\centering
\begin{subfigure}[t]{0.5\textwidth}
\centering
\scalebox{0.8}{
\includegraphics*[viewport=60 310 940 650, width = 3.35in, height = 1.5in]{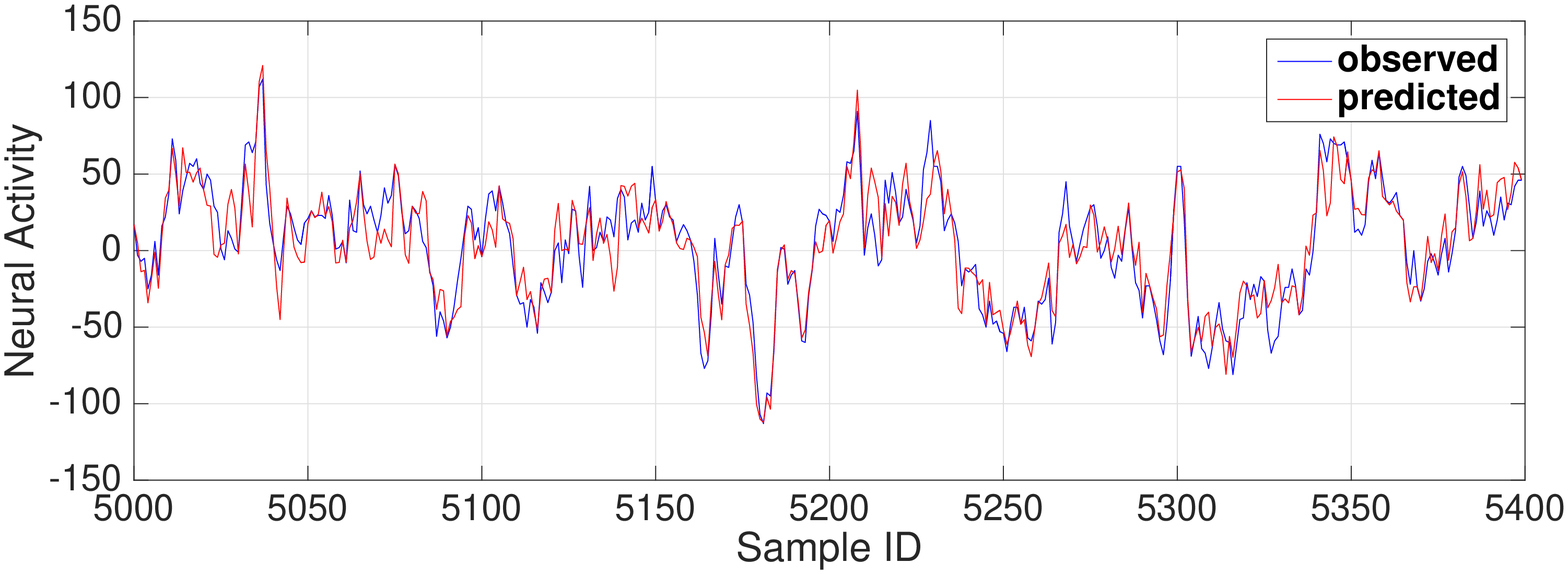}
}
\caption{}
\label{sfig:fiveStep}
\end{subfigure}
\begin{subfigure}[t]{0.5\textwidth}
\centering
\scalebox{0.8}{
\includegraphics*[viewport=60 315 880 630, width = 3.35in, height = 1.5in]{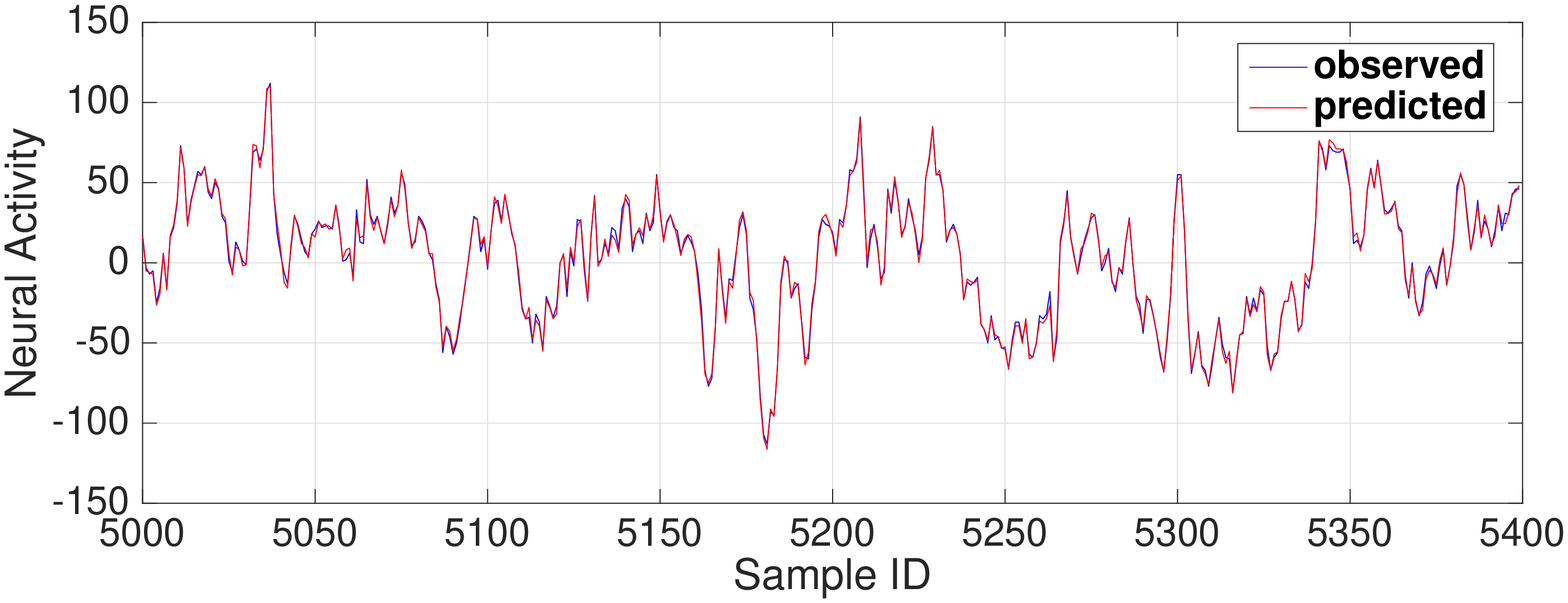}
}
\caption{}
\label{sfig:oneStep}
\end{subfigure}
\caption{Comparison of predicted EEG state for the channel $C_{1}$ using fractional-order dynamical model. The five step and one step predictions are shown in (\ref{sfig:fiveStep}) and (\ref{sfig:oneStep}) respectively.}
\label{fig:sim1}
\end{figure}

%

\subsection{Model parameters estimation}

The parameters of the system model $\alpha$ and $A$, are estimated by the application of Algorithm\,\ref{alg:EM_alg}. The performance of EM algorithm like any iterative algorithm is crucially dependent on its initial conditions. For the considered example of EEG dataset, it was observed that convergence of the algorithm is fast. Further, even a single iteration was sufficient to reach the point of local maxima of the likelihood. This shows that the choice of the initial point for EM algorithm is considerably good. The input coupling matrix $B$ can be easily determined through experiments. The values predicted by the model in comparison with actual data are shown in Figure\,\ref{fig:sim1}. The one step prediction follows very closely the actual data, but there is small mismatch in the five step prediction. The ratio of square root of mean squared error of the prediction by model with and without inputs~\cite{xue} is shown in Figure~\ref{fig:relErrComp} for total of $109$ subjects. As observed, the error ratio is less than one-third in the case when unknown inputs is considered. Therefore, the fractional-order dynamical model with unknown inputs fits the EEG data much better than the case of no inputs. In the next part, we will use these estimated parameters to compute the set of sensors for perfect observability.

\begin{figure}
\centering
\scalebox{0.75}{
\includegraphics*[viewport=30 10 600 490, width = 3.35in, height = 3in]{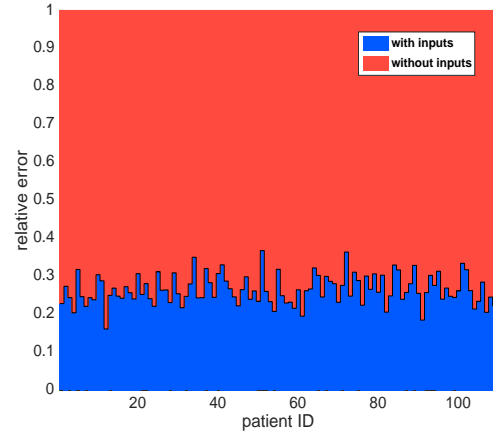}
}
\caption{Error ratio for prediction using fractional-order dynamical model with and without inputs.}
\label{fig:relErrComp}
\end{figure}

\begin{figure}
\centering
\scalebox{0.75}{
\begin{tikzpicture}[scale = 1.4]
\node[anchor=north west,inner sep=0] at (0,0) {\includegraphics*[viewport=60 20 925 690, width = 3.35in, height = 2.8in]{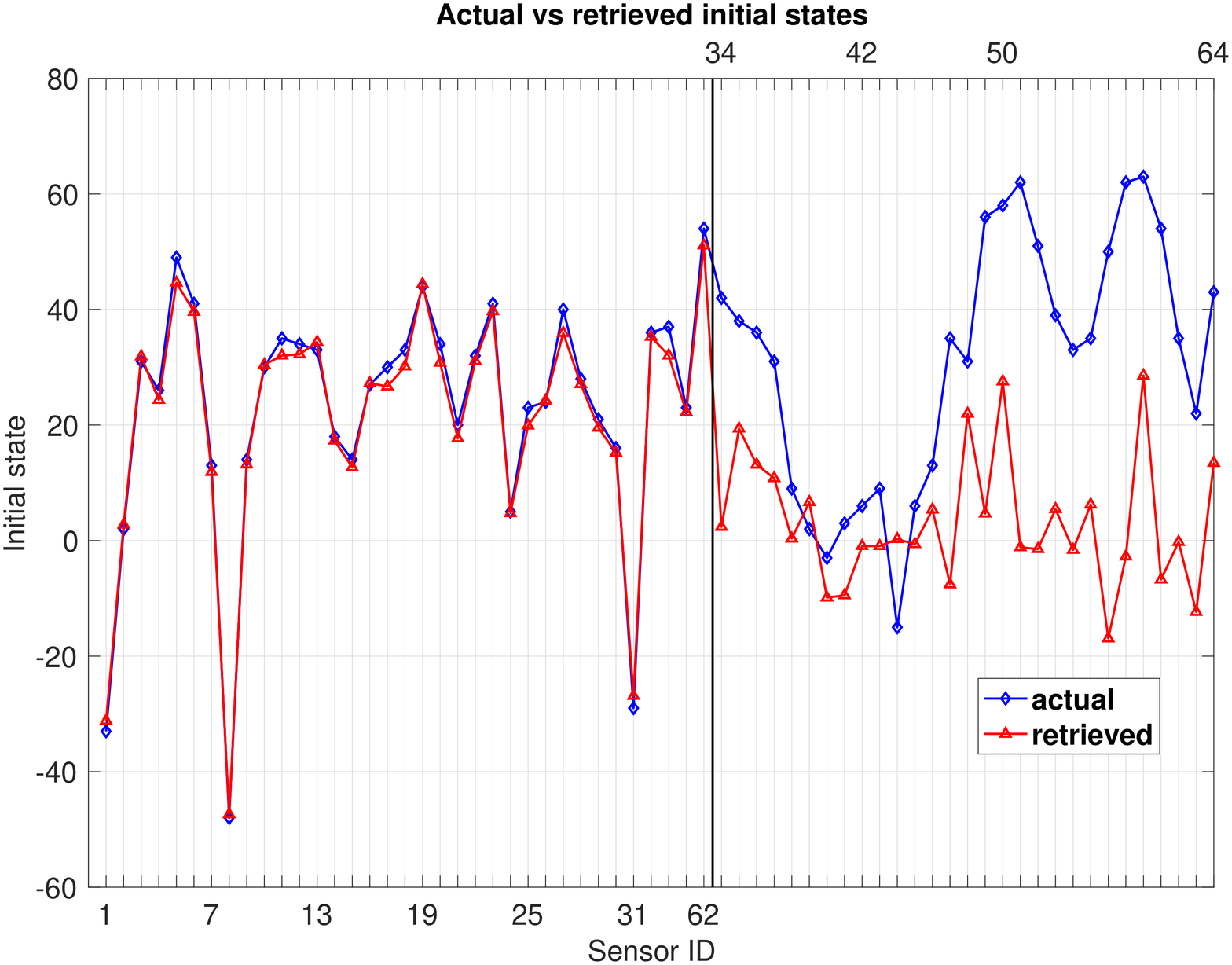}};
\draw[->, color={rgb:red,28;green,111;blue,244}, line width = 0.8pt] (3.55,-4) -- (3,-4);
\node[anchor=north west] at (1.8,-3.8) {\footnotesize{used sensors}};
\node[anchor=north west] at (1.8,-4) {\footnotesize{$(S_{G})$}};

\draw[->, color=Gray, line width = 0.8pt] (3.55,-0.9) -- (4.1,-0.9);
\node[anchor=north west] at (3.8,-0.5) {\footnotesize{unused sensors $(\Omega\setminus S_{G})$}};
\end{tikzpicture}
}
\caption{Actual vs predicted initial state using only subset of sensors $S_{G}$, selected by the Algorithm-\ref{alg:greedy_alg}.}
\label{fig:sim2}
\end{figure}

\subsection{Sensor selection}

The estimated parameters are used to construct $\Theta$ and $\Xi$ as written in (\ref{eqn:ThetaConstr}) and (\ref{eqn:XiConstr}) for the greedy sensor selection Algorithm\,\ref{alg:greedy_alg}. Upon the application of Algorithm\,\ref{alg:greedy_alg}, we found that roughly half of the sensors $(35$ out of $64)$ are sufficient enough to retrieve the initial state and unknown inputs uniquely. The selected sensors for a particular subject are marked in the Figure\,\ref{fig:usedSensors}. Due to the large size and sparsity of $[\Theta\,\,\Xi]$ matrix, some values of the initial state may blow up due to the presence of very small eigenvalues. In such cases, we can first remove the unknown inputs from (\ref{eqn:observEqn}) by multiplying both sides by $W = I - \Xi(\Xi^{T}\Xi)^{-1}\Xi^{T}$, i.e., projecting signals into the orthogonal space of $\Xi$. We can then enforce the norm constraint of $x[0]$ into the least squares estimation of $x[0]$ or, in other words, use RIDGE regression \cite{Murphy}, i.e.,
\begin{equation*}
 x[0] = \text{arg}\min\limits_{x}\vert\vert WY - W\Theta\,x\vert\vert_{2}^{2} + \epsilon\vert\vert x\vert\vert_{2}^{2},
\end{equation*}
where $\epsilon>0$.

Upon the knowledge of the initial state, the unknown inputs are recovered in the similar fashion or by enforcing sparsity constraint. Figure\,\ref{fig:sim2} shows a comparison between the actual and retrieved initial states when using the sensor set $S_{G}$ as output of the Algorithm\,\ref{alg:greedy_alg}. The retrieved initial states are close to the actual values provided they are estimated in the presence of numerical errors and sparsity.

The presented experimental results show how the proposed schemes are useful in mapping the complex dynamics of brain in the presence of unknown stimuli. The same framework can be easily applied to the study of other complex time evolving networks such as the physiological dynamics systems, for example BOLD signals, EMG, ECG etc.

%
%

\section{Conclusion}
\label{sec:concl}

In this paper, we introduced the framework of discrete fractional order dynamical systems under unknown inputs. Also, we provided tools to perform analysis and design of such systems. Specifically, for the analysis, we presented an alternating scheme that enables to determine the best estimate of the model parameters and unknown stimuli. Also,  we provided necessary and sufficient conditions to ensure that it is possible to retrieve the state and unknown stimuli. Furthermore, we enable the design of sensing capabilities of such systems, and provided a mechanism to determine a small subset of variables that need to be measured to ensure that both state and input can be recovered, while establishing sub-optimality guarantees with respect to the smallest possible subset.

Future research will focus on exploiting the structure of fractional order dynamical systems in the context of multi-case scenarios under quantitative description of the estimation quality of the states and inputs. Such extension will enable to determine the confidence in the model obtained that would permit formal claims about the mechanism underlying in the process under study. Additionally, some of the algorithms need to be improved to be deployed in real-time applications when energy and computational resources are limited. 

\section{Acknowledgment}

The authors are thankful to the reviewers for their valuable feedback. G.G. and P.B. gratefully acknowledge the support by the U.S. Army Defense Advanced Research Projects Agency (DARPA) under grant no. W911NF-17-1-0076 and DARPA Young Faculty Award under grant no. N66001-17-1- 4044.

%
%
{
\appendices
{\small
\vspace{-0.2cm}

\section{EM Formulation}
\label{appd:EMFormulation}

We present the detailed construction of EM like algorithm in this section. In our formulation, the observed (incomplete) data is $x$ and $z$ while $u$ is the hidden data, therefore the complete data would be $(z,x,u)$. Let us consider, $\Sigma = \sigma^{2}I$ and at the $l$th iteration we denote
\begin{equation*}
u[k]^{\ast} = \text{arg}\max\limits_{u}\mathbb{P}\left(u \big\vert z[k], x[k]; A^{(l)}\right).
\end{equation*}

We can enforce Laplacian prior for $u[k]$ for sparsity (any other prior could also be used) such that $\mathbb{P}(u[k]) \propto \text{exp}(-\lambda\vert\vert u[k]\vert\vert_{1})$. Therefore, $u[k]^{\ast}$ is then derived as
\begin{eqnarray*}
u[k]^{\ast} &=& \text{arg}\max\limits_{u}\text{log}\mathbb{P}\left(u \big\vert z[k], x[k]; A^{(l)}\right) \\
&=& \text{arg}\max\text{log}\mathbb{P}(u) + \text{log}\mathbb{P}\left( z[k], x[k]\big\vert u;A^{(l)}\right)\\
&=& \text{arg}\max-\frac{1}{2\sigma^{2}}\vert\vert z[k] - A^{(l)}x[k] - Bu\vert\vert_{2}^{2} -\lambda\vert\vert u\vert\vert_{1}.
\end{eqnarray*}
We have approximated the conditional distribution as $\mathbb{P}(u[k] \big\vert z[k], x[k]; A^{(l)}) \approx \mathds{1}_{\left\{u[k] = \overset{\ast}{u}[k]\right\}}$. In the final step of expectation, we can write
\begin{align}
&Q(A;A^{(l)}) =\mathbb{E}_{A^{(l)}}\left[\text{log}\mathbb{P}_{c}(z[k],x[k],u[k]) \big\vert x[k],z[k]\right]\nonumber\\
&\begin{aligned}
\qquad=\mathbb{E}_{u[k]|z[k],x[k];A^{(l)}}\left[\text{log}\mathbb{P}\left(z[k],x[k],u[k]; A\right)\right]\nonumber
\end{aligned}\\
 &\qquad= \begin{aligned}[t]
 &\mathbb{E}_{u[k]|z[k],x[k];A^{(l)}}\Bigl[\text{log}\mathbb{P}\Bigl(z[k],x[k]\big\vert u[k]; A\Bigr)\Bigr]\nonumber \\
&{+}\:\text{log}\mathbb{P}\left(u[k]\right)\mathds{1}_{\left\{u[k] = \overset{\ast}{u}[k]\right\}} \nonumber
\end{aligned}\\
&\qquad= \begin{aligned}[t]
\text{log}\mathbb{P}\left(z[k],x[k]\big\vert u[k]; A\right)\mathds{1}_{\left\{u[k] = \overset{\ast}{u}[k]\right\}},\nonumber
\end{aligned}
\end{align}
where $\mathbb{P}_{c}$ is used to signify the likelihood of the complete data, and constants are ignored. For the Maximization step, we can simply write
\begin{eqnarray*}
A^{(l+1)} &=& \text{arg}\max\limits_{A}Q(A;A^{(l)}) \\
&=& \text{arg}\max\limits_{A}\text{log}\mathbb{P}\left(z[k],x[k]\big\vert u[k];A\right)\mathds{1}_{\left\{u[k] = \overset{\ast}{u}[k]\right\}},
\end{eqnarray*}
\noindent or in other words,
\begin{equation*}
a_{i}^{(l+1)} = \text{arg}\min\limits_{a} \vert\vert \tilde{Z}_{i} - Xa\vert\vert_{2}^{2},
\end{equation*}
where any $k$th component of $\tilde{Z}_{i}$ is $\tilde{Z}_{i,k} = z_{i}[k] - \overset{\ast}{u}[k]^{T}b_{i}$.

\section{Proof of Proposition\,\ref{prop:liklBounded}}
\label{appnd:liklBounded}

\begin{proof}

We show that the likelihood for incomplete (observed) data is bounded at each EM update step. Let us denote the likelihood of the observed data in relation to the parameter $A^{(l)}$ as
\begin{equation}
\mathbb{P}(A^{(l)}) = \mathbb{P}(z,x;A^{(l)}),
\end{equation}
\noindent which is further written as
\begin{eqnarray*}
\mathbb{P}(A^{(l)}) &\propto& \int \mathbb{P}(z,x\big\vert u;A^{(l)})\mathbb{P}(u)du \\
&=& C\int\text{exp}\left(-\frac{1}{2\sigma^{2}}\vert\vert z - A^{(l)}x - Bu\vert\vert_{2}^{2}\right)\\
&&\qquad\text{exp}\left(-\lambda\vert\vert u\vert\vert_{1}\right)du\\
&\leq& C\int\text{exp}\left(-\lambda\vert\vert u\vert\vert_{1}\right)du \leq \mathcal{O}(1),\\
\end{eqnarray*}
\noindent where $C$ is the normality constant. Therefore $\mathbb{P}(A^{(l)})$ is bounded for every iteration index $l\geq 0$.
\end{proof}

\section{Proof of Theorem\,\ref{thm:subMod}}
\label{appnd:subMod}
\begin{proof}

For a given $n\times m$ matrix $A$, let $\mathcal{R}(S)$ denote the span of rows of $A$ indexed by set $S$. Let $f(S)$ be the rank of matrix composed by rows indexed by set $S$, therefore $f(S) = \vert\mathcal{R}(S)\vert$. For given $\Omega = \{1,2,\hdots,n\}$, $S \subseteq T \subseteq \Omega$ and $a \in \Omega \setminus T$, we can write

\begin{eqnarray*}
\vert\mathcal{R}(T\cup a)\vert &=& \vert\mathcal{R}(S\cup a)\vert + \vert\mathcal{R}(T\setminus S) \cap \mathcal{R}(S \cup a)^{\perp}\vert \\
&=& \vert\mathcal{R}(S\cup a)\vert + \vert\mathcal{R}(T\setminus S) \cap \mathcal{R}(S)^{\perp} \cap \mathcal{R}(a)^{\perp}\vert\\
&\leq& \vert\mathcal{R}(S\cup a)\vert + \vert\mathcal{R}(T\setminus S) \cap \mathcal{R}(S)^{\perp}\vert\\
&=& \vert\mathcal{R}(S\cup a)\vert + \vert\mathcal{R}(T)\vert - \vert\mathcal{R}(S)\vert
\end{eqnarray*} 
\noindent where the last equality is written using the fact that dimension of intersection of $\mathcal{R}(T\setminus S)$ and orthogonal space of $\mathcal{R}(S)$ are number of linearly independent rows in $\mathcal{R}(T)$ which are not in $\mathcal{R}(S)$ i.e. $\vert\mathcal{R}(T)\vert - \vert\mathcal{R}(S)\vert$.
\end{proof}
}
}

%
%

\footnotesize

\bibliographystyle{IEEEtran}
\bibliography{IEEEabrv,senPlac}

\end{document}